\documentclass{article} %
\usepackage{nips14submit_e,times}
\usepackage{hyperref}

\usepackage{graphicx}
\usepackage{amsmath}
\usepackage{amsfonts}
\usepackage{amssymb}
\usepackage{amsthm}
\usepackage[sectionbib]{natbib}
\usepackage{algorithm}
\usepackage[noend]{algorithmic}
\usepackage{subfigure}
\usepackage{url}
\usepackage{float}

\usepackage{authblk}

\restylefloat{table}

\newcommand{\defemph}[1]{\textbf{#1}}

\newcommand{\R}{\mathbb{R}}
\newcommand{\N}{\mathbb{N}}

\newcommand{\bigO}{\mathcal{O}}

\newcommand{\bigOmega}{\Omega}

\newcommand{\cD}{\mathcal{D}}
\newtheorem{pr}{Proposition}
\newtheorem{thm}[pr]{Theorem}

\newtheorem{df}[pr]{Definition}

\newcommand{\loss}{f}
\newcommand{\cummloss}[2]{L_{#1}(#2)}
\newcommand{\lossbound}[2]{\textbf{L}_{#1}(#2)}

\newcommand{\comm}[2]{C_{#1}(#2)}
\newcommand{\commbound}[2]{\textbf{C}_{#1}(#2)}

\newcommand{\totaltime}{T}
\newcommand{\inputSpace}{Z}
\newcommand{\modelClass}{W}
\newcommand{\modelconf}{\mathbf{w}}

\newcommand{\uprule}{\varphi}
\newcommand{\onlineAlgo}{A}
\newcommand{\distProtocol}{P}
\newcommand{\syncop}{\sigma}
\newcommand{\divergence}{\delta}
\newcommand{\violations}{V}
\newcommand{\commcost}[1]{c_{#1}}
\newcommand{\commPerViolation}{\textbf{c}_k}
\newcommand{\radius}[1]{D_{#1}}

\newcommand{\projection}[1]{\pi_{\Gamma}\left(#1\right)}

\title{Adaptive Communication Bounds for\\Distributed Online Learning}
\date{}

\author[1]{Michael Kamp}
\author[1]{Mario Boley}
\author[2]{Michael Mock}
\author[3]{Daniel Keren}
\author[4]{Assaf Schuster}
\author[4]{Izchak Sharfman}
\affil[1]{Fraunhofer IAIS \& University Bonn\\\texttt{\{surname.name\}@iais.fhg.de}}
\affil[2]{Fraunhofer IAIS\\\texttt{michael.mock@iais.fhg.de}}
\affil[3]{Haifa University\\\texttt{dkeren@cs.haifa.ac.il}}
\affil[4]{Technion, Israel Institute of Technology\\\texttt{\{assaf,tsachis\}@technion.ac.il}}



\nipsfinalcopy 

\begin{document}

\maketitle

\begin{abstract}
We consider distributed online learning protocols that control the exchange of information between local learners in a round-based learning scenario. The learning performance of such a protocol is intuitively optimal if approximately the same loss is incurred as in a hypothetical serial setting. If a protocol accomplishes this, it is inherently impossible to achieve a strong communication bound at the same time. In the worst case, every input is essential for the learning performance, even for the serial setting, and thus needs to be exchanged between the local learners. However, it is reasonable to demand a bound that scales well with the hardness of the serialized prediction problem, as measured by the loss received by a serial online learning algorithm. We provide formal criteria based on this intuition and show that they hold for a simplified version of a previously published protocol.
\end{abstract}

\section{Introduction}
We consider round-based learning scenarios on multiple connected dynamic data streams where a real-time service is provided by a \defemph{distributed online learning system} of $k\in\N$ local learners.
A \defemph{distributed online learning protocol} controls the exchange of information between the learners with the goal of providing a service quality---measured by a loss function---that is close to the one of a hypothetical serial learner. Such an optimal predictive behavior can be trivially achieved by centralizing all data but at the expense of a total communication cost in $\bigOmega(k\totaltime)$ for a total time horizon $\totaltime\in\N$. This amount of communication in practice often exceeds network capacities or delays the service beyond its time limits. No communication, on the other hand, leads to a significantly higher loss that increases with the number of learners.
Earlier research focused on static communication strategies that retain the service quality of a hypothetical centralized learner by communicating periodically, thereby reducing communication by a fixed factor, but not changing the asymptotic behavior. 
In this paper, we introduce the notion of an adaptive protocol that invests communication only if it thereby significantly reduces its loss. As a consequence, the communication bound of an adaptive protocol depends on the loss of the serial setting instead of the total time. 
We show for a simplified version of dynamic averaging, a previously published protocol, that it satisfies this notion of adaptivity while it retains the optimal predictive behavior of a serial learner.

\section{Performance Bounds for Distributed Online Learning Protocols}
In the following, we assume an online learning algorithm $\onlineAlgo=(\modelClass,\uprule, \loss)$ run on each local learner $l\in[k]$ in the distributed system to maintain a local model $w_{t,l}\in\modelClass$. At each time point $t\in\N$, each learner observes an input $z_{t,l}$ drawn independently from an input space $\inputSpace$ from a time-variant distribution $\cD_t: \inputSpace\rightarrow [0,1]$. Based on this input and the local model, the local learner provides a service whose quality is measured by a loss function $\loss:\inputSpace\times \modelClass\rightarrow \R_+$. After providing the service, the local learner updates its local model using an update rule $\uprule: \inputSpace\times \modelClass\rightarrow \modelClass$. The performance of $A$ within a time horizon $\totaltime\in\N$ is measured by its \defemph{cumulative loss}
\[
\cummloss{\onlineAlgo}{\totaltime}=\sum_{t=1}^\totaltime f(z_t,w_t)\enspace .
\]
Performance guarantees for online learning algorithms are typically given by a \defemph{loss bound} $\lossbound{\onlineAlgo}{\totaltime}$, i.e., for all input sequences $z_1,...,z_T\in\inputSpace$ it holds that $\cummloss{\onlineAlgo}{\totaltime}\leq \lossbound{\onlineAlgo}{\totaltime}$. Loss bounds can be defined with respect to a sequence of reference models, in which case they are referred to as regret bounds. Such regret bounds are typically sub-linear in $\totaltime$, e.g., for scenarios with a static distribution an optimal regret bound is in $\bigO(\sqrt{\totaltime})$ (see, e.g.,~\citet{cesa-bianchi/book/2006}), which is achieved by many online learning algorithms, including stochastic gradient descent~~\citep{zinkevich/nips/2010} and passive aggressive updates~\citep{crammer/jmlr/2006}.
A distributed online learning protocol $\distProtocol=(\onlineAlgo,\syncop)$ runs algorithm $A$ on a distributed online learning system and interchanges information between the local learners by synchronizing their local models $w_{t,1},...,w_{t,k}$ using a \defemph{synchronization operator} $\syncop:\modelClass^k\rightarrow\modelClass^k$.
The cumulative loss of a distributed online learning protocol $P$ is defined as
\[
\cummloss{\distProtocol}{\totaltime,k}\sum_{t=1}^\totaltime\sum_{l=1}^k f(z_{t,l},w_{t,l})\enspace .
\]
We measure the amount of communication of $\syncop$ by its \defemph{communication cost} $\commcost{\syncop}:\modelClass^k\times\N\rightarrow\N$, i.e., $\commcost{\syncop}(\modelconf_t,t)$ is the number of bytes transmitted at time point $t$ when $\syncop$ is applied to the current model configuration $\modelconf_t=w_{t,1},...,w_{t,k}$. The communication cost depends on the system architecture, e.g., in a system with a designated coordinator node all models can be sent to the coordinator, processed and the new models can be send back to the learners using $2k$ messages each containing one model of fixed size, thus $\commcost{\syncop}(\modelconf_t,t)\in\Theta(k)$.
We denote the cumulative amount of communication required for synchronization by 
\[
\comm{\distProtocol}{\totaltime,k}=\sum_{t=1}^\totaltime \commcost{\syncop}(\modelconf_t,t)\enspace . 
\]
There is a natural trade-off between communication and loss of a distributed online learning system. A loss similar to a serial setting can be trivially achieved by permanent centralization. On the other hand, communication can be entirely omitted. If the cumulative loss of an online learning algorithm $\onlineAlgo$ is bounded by $\lossbound{\onlineAlgo}{\totaltime}$, the loss of a distributed system with $k$ local learners running $\onlineAlgo$ without any synchronization is bounded by $\lossbound{\text{nosync}}{\totaltime,k} = k \lossbound{\onlineAlgo}{\totaltime}$, whereas the communication is $\comm{\distProtocol}{\totaltime}=0$. Such a distributed system processes $k\totaltime$ inputs. The loss of a permanently centralizing system on the other hand is bounded by $\lossbound{\text{central}}{\totaltime, k} = \lossbound{\onlineAlgo}{k\totaltime}$, i.e., the loss bound of a serial online learning algorithm processing $k\totaltime$ inputs. 
For example in a static scenario with an optimal loss bound of $\lossbound{\onlineAlgo}{k\totaltime}\in\bigO(\sqrt{k\totaltime})$, the loss bound of centralization is superior by a factor of $\sqrt{k}$.
However, the communication cost $\comm{\distProtocol}{\totaltime}$ is in $\Theta(k\totaltime)$. Current communication strategies that retain the service quality of a hypothetical centralized learner are based on communicating periodically~\citep{mann/nips/2009,dekel/jmlr/2012} after a fixed number $b\in\N$  of data points have been processed, thereby reducing communication by a fixed factor of $1/b$, but the communication bound remains in $\Theta(k\totaltime)$. The communication bound of an adaptive protocol should only depend on $\lossbound{\onlineAlgo}{\totaltime}$ and not on $\totaltime$, while at the same time retaining the loss bound of the serial setting. In the following definition we formalize this.
\begin{df}
A distributed online learning protocol $\distProtocol=(\onlineAlgo,\syncop)$ processing $k\totaltime$ inputs is \defemph{consistent} if it retains the loss bound of the serial online learning algorithm $\onlineAlgo$ processing $kT$ inputs, i.e., 
\[
\lossbound{\distProtocol}{\totaltime,k}\in \bigO\left( \lossbound{\onlineAlgo}{k\totaltime}\right)\enspace .
\]
The protocol is \defemph{adaptive} if its communication bound is linear in the number of local learners $k$ and the loss bound $\lossbound{\onlineAlgo}{k\totaltime,k}$ of the serial online learning algorithm, i.e.,
\[
\commbound{\distProtocol}{\totaltime,k}\in \bigO\left( k \lossbound{\onlineAlgo}{k\totaltime} \right)\enspace .
\]
\end{df}
An efficient protocol is adaptive and consistent at the same time. In the following we will present such a protocol.

\section{An Adaptive and Consistent Distributed Online Learning Protocol}
In~\citet{kamp2014communication} we presented a dynamic protocol, denoted \defemph{dynamic averaging}, that adapts its communication to the loss incurred. For specific scenarios this protocol is consistent. In particular, the protocol is consistent for an online learning scenario where each learner maintains a linear model, i.e., $\modelClass=\R^n$, using a specific type of update rules denoted $f$-proportional convex update rules for a loss function $\loss$. That is, there exists a constant $\gamma > 0$, a closed convex set $\Gamma_{z} \subseteq \R^n$, and $\tau_{z} \in (0,1]$ such that for all $w \in \R^n$ and $z \in \inputSpace$ it holds that
\begin{enumerate}
\item[(i)] $\|w-\uprule(z,w)\|\geq\gamma \loss(z,w)$, i.e., the update magnitude is a true fraction of the loss incurred, and 
\item[(ii)] $\uprule(z,w)=w+\tau_{z}\left( \projection{w}-w \right)$ where $\projection{w}$ denotes the projection of $w$ onto $\Gamma_{z}$, i.e., the update direction is identical to the direction of a convex projection that only depends on the training example.
\end{enumerate}
Examples of such update rules are stochastic gradient descend, as well as passive aggressive updates and their regularized variants. For these update rules the update magnitude is bounded by a multiple of the loss, i.e., there exists a constant $C\in\R_+$ such that for all inputs $z\in\inputSpace$ it holds that $\|w-\uprule(z,w)\|\leq C f(z,w)$. This is true, e.g., for stochastic gradient descent with $C=\radius{\inputSpace}\radius{\modelClass}$, where $D$ denotes the diameter. 
In this setting dynamic averaging retains the regret bounds of static averaging, i.e., synchronizing every $b\in\N$ rounds. 
In case of a fixed target distribution $\cD$ it has been shown by~\citet{dekel/jmlr/2012} that static averaging retains the optimality of stochastic gradient descent, i.e., it retains the regret bound of the serial online learning algorithm. Thus, both static and dynamic averaging are consistent. However, static averaging has communication costs of $\comm{\text{static}}{T,k}=\commPerViolation\lceil T/b \rceil$ and is thus not adaptive. In the following we will define dynamic averaging and provide a communication bound in $\lossbound{\onlineAlgo}{k\totaltime}$.

The dynamic averaging protocol $\distProtocol=(\onlineAlgo,\syncop_{\Delta})$ synchronizes the local learners using a \defemph{dynamic averaging operator} $\syncop_{\Delta}$. This operator only communicates when the \defemph{model divergence} ${\divergence(\modelconf_t)=1/k \sum_{l=1}^k\|w_{t,l}-\overline{\modelconf}_t\|^2}$ exceeds a divergence threshold $\Delta$, where $\overline{\modelconf}_t=1/k \sum_{l=1}^k w_{t,l}$ denotes the average model. 
The dynamic averaging operator is defined as
\[
\syncop_\Delta(\modelconf_t)=\begin{cases}(\overline{\modelconf}_t,\dots,\overline{\modelconf}_t), &\text{ if } \divergence(\modelconf_t)>\Delta\\
\modelconf_t, &\text{ otherwise}\\
\end{cases}\enspace .
\]
In order to decide when to communicate, each local learner $l\in [k]$  monitors the \defemph{local condition} $\|w_{t,l}-r_t\|^2\leq\Delta$ for a \defemph{reference vector} $r_t\in\modelClass$ that is common among all learners (see~\citet{keren2012shape,sharfman/tods/2007,gabel/IPDPS/2014,giatrakos2012prediction} for a more general description of this method). The local conditions guarantee that if none of them is violated the divergence does not exceed the threshold $\Delta$. 
If a local condition is violated, a synchronization is triggered. We refer to the averaging of all models as \defemph{full synchronization}. 
Using an architecture with $k$ local learners and an additional coordinator node, a full synchronization can be performed with communication linear in $k$, i.e., $\commcost{\syncop_{\Delta}}(\modelconf_t,t)=\commPerViolation\in\bigO(k)$. In order to further reduce communication, instead of a full synchronization, the distributed system can try to resolve the violation by averaging a small subset of local models so that the local conditions hold after this averaging. We refer to this operation as \defemph{partial synchronization}. 
The subset is augmented by local models until either all local conditions hold or a stop criterion of an appropriate hedging strategy is met. The hedging strategy ensures that the communication in each time step cannot exceed $\commPerViolation$.
In the following we show that dynamic averaging is adaptive.
\begin{thm}
The communication $\comm{\distProtocol}{\totaltime,k}$ of a distributed online learning protocol using the dynamic averaging operator and an $f$-proportional convex update rule with ${\|w-\uprule(z,w)\|\leq C f(z,w)}$ is bounded by
\[
\commbound{\distProtocol}{\totaltime,k} = \commPerViolation\frac{C}{\sqrt{\Delta}}\lossbound{\distProtocol}{\totaltime,k}
\]
where $\commPerViolation$ is an upper bound on the amount of communication per time point $t$.
\end{thm}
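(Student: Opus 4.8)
The plan is to bound the total communication by the number of synchronization events and then to charge each such event to a fixed amount of loss. First I would use the hedging guarantee: at every time point $t$ where a synchronization is triggered the communication is at most $\commPerViolation$, and at every other time point $\syncop_\Delta$ leaves the configuration untouched and costs nothing. Writing $V$ for the number of time points at which a local condition is violated, this gives $\comm{\distProtocol}{\totaltime,k}\leq\commPerViolation V$, so it remains to show $V\leq (C/\sqrt{\Delta})\,\lossbound{\distProtocol}{\totaltime,k}$.

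The core of the argument is a drift estimate for a single learner. Consider a learner $l$ that triggers the violation at a synchronization time $t$, and let $t_0<t$ be the last time before $t$ at which $l$ was reset, i.e. its model was set to the reference vector so that $\|w_{t_0,l}-r_{t_0}\|=0$; since the reference vector is held fixed between synchronizations, $r_s=r_{t_0}$ for $t_0\le s\le t$. A violation means $\|w_{t,l}-r_t\|^2>\Delta$, hence $\|w_{t,l}-r_{t_0}\|>\sqrt{\Delta}$. Because each model change is an application of the update rule, $w_{s+1,l}=\uprule(z_{s,l},w_{s,l})$, the triangle inequality chains the per-step increments into
\[
\sqrt{\Delta}<\|w_{t,l}-w_{t_0,l}\|\le\sum_{s=t_0}^{t-1}\|w_{s,l}-\uprule(z_{s,l},w_{s,l})\|\le C\sum_{s=t_0}^{t-1}\loss(z_{s,l},w_{s,l})\enspace,
\]
where the last step uses only the hypothesis $\|w-\uprule(z,w)\|\le C\loss(z,w)$ (note that neither the lower bound (i) nor the direction condition (ii) of an $\loss$-proportional convex update is needed here). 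Thus the loss accumulated by the triggering learner since its previous reset exceeds $\sqrt{\Delta}/C$.

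Finally I would sum this charge over all $V$ synchronizations. Each synchronization is associated with a triggering learner and the time window since that learner's previous reset; for full synchronization all learners reset simultaneously, so these windows are consecutive and pairwise disjoint in time, and distinct charges therefore draw on disjoint $(\text{learner},\text{time})$ summands of the cumulative loss. Hence the total charged loss, which is at least $V\sqrt{\Delta}/C$, is bounded by $\cummloss{\distProtocol}{\totaltime,k}\le\lossbound{\distProtocol}{\totaltime,k}$, and rearranging yields the claimed $\commbound{\distProtocol}{\totaltime,k}=\commPerViolation (C/\sqrt{\Delta})\,\lossbound{\distProtocol}{\totaltime,k}$. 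The step I expect to require the most care is the bookkeeping of the reference vector and the resets under partial synchronization: there the violating learner is averaged only with a subset, so one must verify that its drift is genuinely reduced (so that a fresh $\sqrt{\Delta}$ of drift, and hence $\sqrt{\Delta}/C$ of loss, must accumulate before it can trigger again) and that no loss summand is charged twice across overlapping subsets. Since partial synchronization never communicates more than the $\commPerViolation$ of a full synchronization, full synchronization is the worst case for the count $V$ and the bound carries over.
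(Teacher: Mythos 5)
Your proof is correct and follows essentially the same route as the paper's: bound the communication by $\commPerViolation$ times the number of violating time points, then bound that count by noting that each violation requires $\sqrt{\Delta}$ of accumulated drift, which via $\|w-\uprule(z,w)\|\le C\loss(z,w)$ costs at least $\sqrt{\Delta}/C$ of cumulative loss. Your explicit charging over disjoint reset windows is just a slightly more careful rendering of the paper's per-learner drift sum, and the partial-synchronization subtlety you flag at the end is one the paper sidesteps entirely by assuming a full synchronization (hence a full reset of all local conditions) after every violation.
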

\begin{proof}
The dynamic averaging protocol communicates only if a violation of a local condition  $\|w_{t,l}-r_t\|^2\leq\Delta$ occurs. At each time point with at least one violation dynamic averaging has communication costs of at most $\commPerViolation$, i.e., the cost of a full synchronization. Thus, we can bound the amount of communication by bounding the number of violations. That is, we derive a bound for $\violations_l(\totaltime)$, the number of time points $t\in[\totaltime]$ where the local condition of learner $l$ is violated. 
For that, assume that at $t=1$ all models are initialized with $w_{1,1}=\dots=w_{1,k}$ and $r_1=\overline{W}_1$, i.e., for all local conditions it holds that $\|w_{1,l}-r_1\|=0$. A violation, i.e., $\|w_{t,l}-r\|>\sqrt{\Delta}$, occurs if one local learner drifts away from $r_t$ by more than $\sqrt{\Delta}$. After a violation a full synchronization is performed and $r_t=\overline{W}_t$, hence $\|w_{t,l}-r_t\|=0$ and the situation is again similar to the initial setup for $t=1$. In the worst case, a local learner drifts continuously in one direction until a violation occurs. Hence, we can bound the number of violations $\violations_l(\totaltime)$ by the sum of its drifts divided by $\sqrt{\Delta}$:
\[
\violations_l(\totaltime)\leq \frac{1}{\sqrt{\Delta}}\sum_{t=1}^\totaltime\|w_{t,l}-w_{t+1,l}\|=\frac{1}{\sqrt{\Delta}}\sum_{t=1}^\totaltime\|w_{t,l}-\uprule(z_{t,l},w_{t,l})\|\leq\frac{1}{\sqrt{\Delta}}\sum_{t=1}^\totaltime C\loss(z_{t,l},w_{t,l})\enspace .
\]
To bound the communication we need to bound the number of time points $t\in[\totaltime]$ where at least one learner $l$ has a violation, denoted $\violations(\totaltime)$. In the worst case, all violations at all local learners occur at different time points,so that we can upper bound $\violations(\totaltime)$ by the sum of local violations $\violations_l(\totaltime)$ which is again upper bounded by the cumulative sum of drifts of all local models:
\[
\violations(\totaltime)\leq \sum_{l=1}^k\violations_l(\totaltime) \leq \frac{1}{\sqrt{\Delta}}\sum_{t=1}^\totaltime\sum_{l=1}^k C\loss(z_{t,l},w_{t,l})=\frac{C}{\sqrt{\Delta}}\cummloss{\distProtocol}{\totaltime,k}\enspace .
\]
Since dynamic averaging has communication costs of at most $\commPerViolation$ per time point, the total amount of communication is 
\[
\comm{\distProtocol}{\totaltime,k} = \commPerViolation \violations(\totaltime)\leq \commPerViolation\frac{C}{\sqrt{\Delta}}\cummloss{\distProtocol}{\totaltime,k}\leq\underbrace{\commPerViolation}_{\in\bigO(k)}\frac{C}{\sqrt{\Delta}}\lossbound{\distProtocol}{\totaltime,k}\in\bigO\left(k\lossbound{\distProtocol}{\totaltime,k}\right)
\]
\end{proof}
Since dynamic averaging is consistent in this setting, $\lossbound{\distProtocol}{\totaltime,k}\in\bigO(\lossbound{\onlineAlgo}{k\totaltime})$ and we can follow that $\commbound{\distProtocol}{\totaltime,k}\in\bigO\left(k\lossbound{\onlineAlgo}{k\totaltime}\right)$, i.e., dynamic averaging is adaptive.

\section{Conclusion}
The analysis of the dynamic averaging protocol constitutes a first example of a relatively specific setting in which adaptivity and consistency can be achieved at the same time. Central for the future is to adapt dynamic averaging so that this holds also for more general settings. Moreover, tighter communication bounds are desirable which could be achieved by taking into account the communication reduction potential of partial synchronization and hedging strategies.

\subsubsection*{Acknowledgments}
This research has been partially supported by the EU FP7-ICT-2013-11 under grant 619491 (FERARI).

\begin{small}
\bibliographystyle{plainnat}
\bibliography{biblio}
\end{small}
\end{document}